\definecolor{lightgray}{gray}{0.95}
\newtheorem{theorem}{Theorem}
\newtheorem{corollary}[theorem]{Corollary}
\newtheorem{definition}{Definition}
\def\>{\rangle} 
\def\<{\langle}
\DeclareMathOperator{\tr}{tr}
\begin{document} 
 
\title{Robust projective measurements through measuring code-inspired observables}

\author{Yingkai Ouyang}
\email{y.ouyang@sheffield.ac.uk}
\affiliation{Department of Physics and Astronomy, University of Sheffield, Sheffield, S3 7RH, United Kingdom}

\begin{abstract} 
Quantum measurements are ubiquitous in quantum information processing tasks, but errors can render their outputs unreliable.
Here, we present a scheme that implements a robust projective measurement through measuring code-inspired observables.
Namely, given a projective POVM, a classical code and a constraint on the number of measurement outcomes each observable can have, 
we construct commuting observables whose measurement is equivalent to the projective measurement in the noiseless setting.
Moreover, we can correct $t$ errors on the classical outcomes of the observables' measurement if the classical code corrects $t$ errors.
Since our scheme does not require the encoding of quantum data onto a quantum error correction code,
it can help construct robust measurements for near-term quantum algorithms that do not use quantum error correction.
Moreover, our scheme works for any projective POVM, and hence can allow robust syndrome extraction procedures in non-stabilizer quantum error correction codes. 
\end{abstract}

\maketitle

\section{Introduction}
Quantum measurements, ubiquitous in quantum information processing tasks, are basic building blocks used in all quantum algorithms, such as in quantum sampling \cite{tillmann2013experimental,lund2017quantum,wild2021quantum}, quantum learning \cite{learn-unitary2010,arunachalam2017guest,haah2017sample,learnTgates,ouyang-learn-graph-products}, 
quantum channel estimation \cite{escher2011general,Hayashi11,pirandola2019fundamental,zhou2021asymptotic},
quantum parameter estimation \cite{HELSTROM1967101,helstrom,holevo,HM08,Albarelli2019_PRL,sidhu2020tight,CSLA,HO}, or universal quantum computations \cite{mbqc-PhysRevA.68.022312,van2013universal,menicucci2006universal,mbqc-briegel2009measurement}.
However, errors in quantum measurements prevent these quantum algorithms from unlocking their full potential.

Quantum algorithms use either just the classical outputs of quantum measurements or both the classical outputs and the measured states.
Near-term quantum algorithms
such as quantum sampling, quantum learning, and quantum parameter estimation algorithms
use primarily the classical outputs of quantum measurements.
When errors afflict the classical outcomes these near-term quantum algorithms' measurements, 
the precision of these quantum algorithms' outputs suffers.
Regarding near-term quantum algorithms, 
there has been a plethora of recent recent on the topic of quantum error mitigation \cite{geller2020rigorous,maciejewski2020mitigation,mitigate-meas-errors-PhysRevA.103.042605,mitigation-meas-errors-PRXQuantum.2.040326,cai2023quantum,metrology-noisy-meas-PRXQuantum.4.040305},
where the goal is to reduce the statistical error of quantum measurements.
This is achieved through repeated experiments and classical post-processing of the additional classical data obtained.
However, the question of how to directly correct such measurement errors in these near-term algorithms without access to quantum error correction (QEC) is an open problem.

However, these mitigation schemes do not correct the measurement errors that occur.
Hence arises the question that has been open since the dawn of the research field of quantum computing:
 
Universal quantum computations can use both quantum and classical outputs of measurements. 
Correction of both quantum and classical errors in measurements using stabilizer codes has been discussed in the context of data-syndrome codes \cite{ashikhmin2014robust,fujiwara2014,ashikhmin2016correction,ashikhmin2020quantum,kuo2021decoding,nemec2023quantum,guttentag2023robust}, single-shot QEC \cite{bombin2015single,campbell2019qss,quintavalle-PRXQuantum.2.020340}, and fault-tolerant quantum computing \cite{campbell2017roads}. 
However, the pertinent question of how to correct measurement errors for non-stabilizer codes, such as for bosonic codes \cite{CLY97,GKP01,BinomialCodes2016,ouyang2019permutation,rotation-invariant-PhysRevX.10.011058}, remains unanswered.

Here, we present a scheme that implements a robust projective measurement through measuring code-inspired observables.
Namely, given a projective POVM, a classical code and a constraint on the number of measurement outcomes each observable can have, 
we construct commuting observables whose measurement is equivalent to the noiseless projective measurement.
Moreover, we can correct $t$ errors on the classical outcomes of the observables' measurement if the classical code corrects $t$ errors.
The minimum number of commuting observables required depends on 
(1) the number of measurement outcomes for each commuting observable, 
(2) the number of measurement outcomes for the underlying projective measurement,
and (3)
the number of errors on classical outcomes that we wish to correct.
We obtain bounds on the minimum number of commuting observables required based on bounds on the parameters of classical codes.

We suggest how to implement our scheme using ancillary coherent states. The requirements are modest. 
Namely, we need access to a linear coupling between the observables and ancillas, and the ability to perform homodyne measurement on the ancillas.
Hence, using a modest amount of quantum control, we can in fact correct measurement errors, without need for quantum error correction codes.

We explain how our scheme allows the correction of measurement errors in any QEC code that satisfies the Knill-Laflamme QEC criterion \cite{KnL97}. 
Namely,
given any QEC code that corrects a set of errors $\mathfrak K$,
we bound the minimum number of commuting observables $n_{\mathfrak K,t}$ required to 
correctly perform the syndrome extraction stage in the Knill-Laflamme recovery procedure if there are up to $t$ errors on the syndrome.
Based on this, we give bounds on $n_{\mathfrak K,t}$, and elucidate this bound for binary QEC codes and the binomial code.

We envision our scheme to complement existing quantum error mitigation techniques, and thereby enhance the performance of near-term quantum algorithms. 
In the longer term, our scheme can also enhance the design of fault-tolerant quantum computations on non-stabilizer codes, such as those reliant on bosonic codes \cite{rotation-invariant-PhysRevX.10.011058,grimsmo2020quantum,noh2020fault}.

\begin{figure}
    \centering
    \includegraphics[width = 0.49\textwidth]{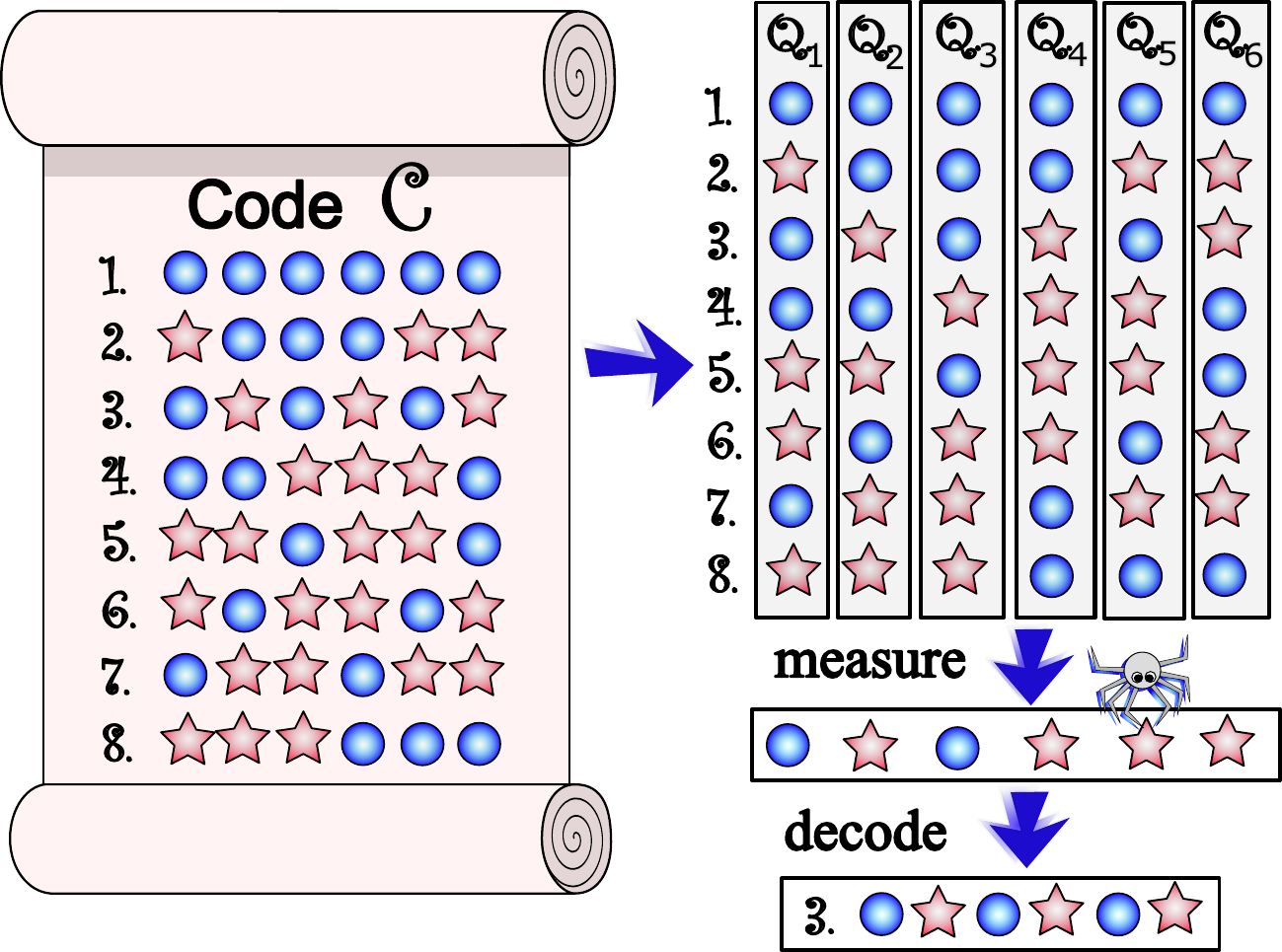}
    \caption{{\bf Our scheme.} Suppose that a projective measurement $P$ projects a quantum state into one of 8 orthogonal subspaces. We label each subspace with a codeword of a classical code $C$. We depict a shortened Hamming code with 8 codewords. Each codeword is a 6-bit string, and the code corrects one error. Then we illustrate the commuting $q$-observables $Q_1,\dots, Q_6$ as columns on the right side of the diagram. Each $q$-observable is an appropriate linear combination of projectors in $P$. Here, $q=2$, corresponding to binary outcome observables. 
    A spider illustrates an error that occurs on the measurement outcome of $Q_5$.
    We can correct this error using the decoder of $C$, recovering the correct measurement outcome. We conclude that the quantum state has been projected to the third subspace.
    }
    \label{fig:scheme}
\end{figure}

\section{Measurements}

We can describe a measurement as a POVM \cite{NielsenChuang}, which is a set of positive operators that sum to the identity operator. 
Without loss of generality, we can always focus on projective POVMs, where the positive operators are furthermore pairwise orthogonal projectors. 
This is because Naimark's theorem ensures that 
for any POVM, 
we can always perform a projective POVM on an extended Hilbert space \cite{beneduci2020notes}.

From the Born rule, 
measuring a projective POVM $P \coloneqq \{P_1, \dots, P_M\}$ 
with pairwise orthogonal projectors  
on an input state $\rho$
yields the {\em post-measurement state}
$  \rho_k \coloneqq  P_k \rho P_k / \tr[\rho P_j]$
with probability
$p_k \coloneqq \tr[\rho P_k]$.
We denote the measurement's output as $(\rho_k,k)$ where $k$ is the  measurement's {\em classical outcome} that allows us to uniquely identify the post-measurement state $\rho_k$. 

Mathematically, an observable is a Hermitian operator.
Consider an observable $O = \sum_k \lambda_k P_k$, where $\lambda_k$ are distinct real numbers for different values of $k$.
Measurement of $O$ on $\rho$ gives an output $(\rho_k, \lambda_k)$ comprising of a post-measurement state and some eigenvalue of $O$. According to the Born rule, we obtain $(\rho_k, \lambda_k)$ with probability $p_k$.
Since there exists a function that maps $\lambda_k$ back to $k$, the measurement of $O$ is the same as the measurement of $P$.

Errors affect a measurement's output in two different ways. First, errors can corrupt the classical outcome $k$.
Such errors can lead us to mistakenly conclude that the post-measurement state is $ \rho_v $ for $v \neq k$ when the true post-measurement state is in fact $\rho_k$.
Second, errors can corrupt the post-measurement state $\rho_k$. 
Here, we propose a measurement scheme that allows correction of errors on classical outcomes.


Now, let $q$ be an integer where $q\ge 2$, and let us define a Hermitian operator with $q$ distinct eigenvalues as a $q$-observable. 
Operationally, the integer $q$ counts the number of possible measurement outcomes of each observable. 

\section{Commuting observables from classical codes}

In the observable $ O $, the integers $ 1, \dots, M $ label $ M $ distinct measurement outcomes. 
Consider a classical code $ C $ comprising of $ M $ distinct codewords.
When $C$ is a $q$-ary code of length $n$, 
each codeword is a vector in 
$ \{0,1,\dots, q-1\}^n $.
We denote 
\begin{align}
{\bf x}^{(k)} = (x^{(k)}_{1},\dots,x^{(k)}_{n})  \notag 
\end{align}
as the $k$th codeword of $C$,
and we can write $C=\{{\bf x}^{(k)}:k=1,\dots,M\}$.

Each integer $1,\dots, M$ labels exactly one codeword in $C$. 
The encoder $E_C$ of $C$ is a bijective map from the classical labels in $\{1,\dots,M\}$ to codewords in $C$. 
Namely, $E_C(k) = {\bf x}^{(k)}$. 
Without errors on the components of ${\bf x}^{(k)}$, a decoder of $C$ performs the inverse map of $E_C$, 
and maps the codeword ${\bf x}^{(k)}$ back to the label $k$.

In the measurement of $P$, errors could afflict its classical outcome. 
To address this, we propose the measurement of $n$ commuting $q$-observables $Q_1,\dots, Q_n$ that encode redundant information about $P$. 
We denote the classical outcome of  $Q_j$'s measurement as $y_j$ and denote the output of the measurements of $Q_1,\dots, Q_n$ as 
$(\tau,{\bf y})$ 
where $\tau$ denotes the post-measurement state and ${\bf y} = (y_1,\dots, y_n)$.
We want the $q$-observables to be {\em consistent} with $P$, in the sense that measurement of the $q$-observables performs the same measurement as $P$ in the noiseless setting.
Hence we give the following definition.
\begin{definition}
Let $P$ be a projective POVM and $Q_1,\dots, Q_n$ be commuting observables. 
The observables $Q_1,\dots, Q_n$ are consistent with $P$ if 
there exists a function $f$ such that for any output $(\tau,{\bf y})$ of the measurement of $Q_1,\dots, Q_n$ on $\rho$,
we have $\tau = \rho_{f({\bf y})}$.
\end{definition}

We propose to construct $q$-observables using information about a projective POVM $P$ and a classical $q$-ary code $C$.
Namely, for $j=1,\dots, n$ we define the $q$-observables as
\begin{align}
    Q_j(C,P) \coloneqq \sum_{k=1}^M x^{(k)}_{j} P_k.\label{def:qj}
\end{align}
When the context is clear, we use $Q_j$ to denote $ Q_j(C,P)$.
From the orthogonality of the projectors $P_k$, the observables 
$Q_1,\dots, Q_n$ are pairwise commuting, which allows us to measure $Q_1,\dots, Q_n$ in any order.

In our construction,
the correctibility of errors on the measurement outcomes of our $q$-observables depends on the minimum distance of $C$, given by
\begin{align}
    d(C) \coloneqq \min_{{\bf y} \neq  {\bf z} \in C} d_H({\bf y} , {\bf z}),\notag
\end{align}
where
$   d_H({\bf y},{\bf z}) \coloneqq |\{i : y_i \neq z_i\}|$
 is the Hamming distance between 
 tuples
 ${\bf y}$ and ${\bf z}$.
Namely, we can correct any $t(C) \coloneqq \lfloor (d(C)-1)/2 \rfloor$ errors on the classical outcomes of $Q_1, \dots, Q_n$.

A decoder of a classical code can correct up to $t(C)$ measurement errors on ${\bf y}$.
This is because a noiseless ${\bf y}$ must be a codeword in $C$.
Here, a decoder $\mathcal D$ of a code $C$ is a function $\mathcal D : \{0,\dots,q-1\}^n \to \{1,\dots,M\}$
which maps an $n$-tuple to an index that labels the codewords.
Given some non-negative integer $a$, we say that $\mathcal D$ is an $a$-decoder of $C$, if for all $k=1,\dots, M,$ and for all ${\bf y}$ such that 
$d({\bf y},{\bf x}_k)\le a$, we have 
\begin{align}
    \mathcal D({\bf y}) = k.
\end{align}
An $a$-decoder corrects $a$ errors.
When $d(C)=d$, then there is a 
$t(C)$-decoder for $C$.
Our main result is the following.
\begin{theorem}
\label{thm:main}
Let $C$ be a $q$-ary code of length $n$, and 
let $\mathcal D$ be a 
$t(C)$-decoder for $C$.
We measure $Q_1(C,P), \dots, Q_n(C,P)$ on a quantum state $\rho$, and obtain the classical outcome ${\bf y}=(y_1, \dots, y_n)$ along with the post-measurement state $\tau$.
Suppose that at most $t(C)$ components of ${\bf y}$ have been corrupted.
Then $\tau = \rho_{\mathcal D({\bf y})}$.
\end{theorem}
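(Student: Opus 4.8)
The plan is to reduce the statement to a decoding guarantee by first pinning down what the joint measurement of the commuting observables does physically in the absence of noise, and then arguing that errors on the classical outcome leave the post-measurement state untouched.

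First I would analyze the common eigenstructure of $Q_1,\dots,Q_n$. Because the $P_k$ are pairwise orthogonal and sum to the identity, each $Q_j = \sum_k x^{(k)}_j P_k$ is already diagonal with respect to the decomposition furnished by $P$. Hence the $Q_j$ share this decomposition, and the joint eigenspace attached to an outcome vector ${\bf c}\in\{0,\dots,q-1\}^n$ is the direct sum of the ranges of those $P_k$ with ${\bf x}^{(k)}={\bf c}$. Since $C$ consists of $M$ distinct codewords, this sum is either the range of a single $P_k$, when ${\bf c}={\bf x}^{(k)}$, or $\{0\}$ otherwise. Therefore measuring $Q_1,\dots,Q_n$ simultaneously is, in the noiseless case, exactly the projective measurement $P$: the state collapses into the range of some $P_k$, the recorded outcome is the codeword ${\bf y}={\bf x}^{(k)}$, and by the Born rule this happens with probability $p_k=\tr[\rho P_k]$ with post-measurement state $\rho_k = P_k\rho P_k/\tr[\rho P_k]$.

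Next I would separate the quantum collapse from the classical record. The corruption in the hypothesis acts only on the components of ${\bf y}$, that is, on the classical data we read off, and not on the physical projection. Consequently, whatever the observed ${\bf y}$, the genuine post-measurement state is still $\tau=\rho_k$ for the index $k$ labelling the subspace into which $\rho$ collapsed, and the true uncorrupted outcome is the codeword ${\bf x}^{(k)}$. The assumption that at most $t(C)$ components of ${\bf y}$ are corrupted then says precisely that $d_H({\bf y},{\bf x}^{(k)})\le t(C)$.

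Finally I would invoke the decoder. Since $\mathcal D$ is a $t(C)$-decoder and $d_H({\bf y},{\bf x}^{(k)})\le t(C)$, the defining property of an $a$-decoder gives $\mathcal D({\bf y})=k$, whence $\rho_{\mathcal D({\bf y})}=\rho_k=\tau$, as claimed. I expect the only genuine subtlety to lie in the first step: one must use the distinctness of the codewords to ensure the joint eigenspaces are in bijection with codewords, so that the joint measurement coincides with $P$ and the label $k$ is well defined. Once the outcome is identified with a unique codeword, the remainder is simply the decoding guarantee applied to the Hamming ball of radius $t(C)$ around ${\bf x}^{(k)}$.
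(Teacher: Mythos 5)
Your proposal is correct and takes essentially the same route as the paper's proof: you identify the joint eigenspaces of $Q_1,\dots,Q_n$ with the distinct codewords of $C$, which is the same observation the paper makes via the product $\prod_{j=1}^n P_{j,z_j} = P_{\mathcal D({\bf z})}$ in its noiseless Case 1. Both arguments then conclude identically, noting that corruption affects only the classical record so $\tau = \rho_k$ with true outcome ${\bf x}^{(k)}$, and that $d_H({\bf y},{\bf x}^{(k)})\le t(C)$ forces $\mathcal D({\bf y})=k$ by the decoder guarantee.
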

\begin{proof}
Let ${\bf z}=(z_1,\dots,z_n)$ denote the classical outcome if no errors occurred.
Then, ${\bf z}-{\bf y}$ has a Hamming weight of at most $t$.
Furthermore, we have 
\begin{align}
    \mathcal D({\bf z}) = 
    \mathcal D({\bf y}) .\label{correct decoding}
\end{align}
\noindent
{\bf Case 1: No errors on measurement outcomes.}
When we measure the observable
 $Q_j(C,P)$ and obtain the classical outcome $z_j$, the resultant state must be on the support of the projector 
\begin{align}
    P_{j,z_j} =  \sum_{k: z_j = x_{k,j}} P_k.
\end{align}
After measuring the observables
$Q_1(C,P), \dots, Q_n(C,P)$, 
we obtain the classical outcomes
$z_1, \dots, z_n$.
Then the state $\tau$ is on the support of
\begin{align}
    \prod_{j=1}^n P_{j,z_j}
    = 
    \sum_{k: z_j = x_{k,j}, j=1,\dots,n} P_k 
    =
    \sum_{k: {\bf z} = {\bf x}_k} P_k.
    \label{eq:prod projs}
\end{align}
From \eqref{eq:prod projs}, ${\bf z}$ must belong to $C$.
Since there are no repeated codewords in $C$, there is a unique $k$ for which ${\bf z} = {\bf x}_k$.
Together with the fact that $t(C) \ge 0$, it follows that
\begin{align}
    \prod_{j=1}^n P_{j,z_j}
    = 
P_{\mathcal D({\bf z})}.
    \label{eq:prod projs2}
\end{align}
The state $\tau$ must be on the support of $P_{\mathcal D({\bf z})}$,
which means that $\tau = \rho_{\mathcal D({\bf z})}$.
\newline

\noindent
{\bf Case 2: At most $t(C)$ errors on classical outcomes.} 
From case 1, we know that $\tau = \rho_{\mathcal D({\bf z})}$.
Since $\mathcal D({\bf y}) = \mathcal D({\bf z})$, we have $\tau = \rho_{\mathcal D({\bf y})}$. 
\end{proof}
In our proof of Theorem 1, we show that in the noiseless setting, the $n$-tuple of classical outcomes is a codeword of $C$.
When there are at most $t(C)$ errors on the classical outcomes, 
the decoder $\mathcal D$ corrects these errors.
Hence, the observables $Q_1(C,P),\dots, Q_n(C,P)$ are consistent with $P$, even in the presence of some errors on the classical outcomes.

As an example, consider a scheme that uses the shortened Hamming code $C_6$ and a projective POVM $P = \{P_1,\dots , P_8\}$ to define the six binary observables to measure both in the noiseless setting.
In this example, the parameters of the code are $q=2,n=6,d(C_6)=3,$ and $M=8.$
Now the code $C_6$ is a linear code generated by binary vectors $a_1 = \texttt{100011}$,
$a_2 = \texttt{010101},$
and $a_3 = \texttt{001110}$, and has eight codewords given by
\begin{align}
{\bf x}_{1} &=  \texttt{000000},\notag\\
{\bf x}_{2} &= \texttt{100011} =a_1,\notag\\
{\bf x}_{3} &= \texttt{010101} =a_2,\notag\\
{\bf x}_{4} &= \texttt{001110} = a_3,\notag\\
{\bf x}_{5} &= \texttt{110110} = a_1+a_2  ,\notag\\
{\bf x}_{6} &= \texttt{101101} =a_1+a_3 ,\notag\\
{\bf x}_{7} &= \texttt{011011} =a_2+a_3 ,\notag\\
{\bf x}_{8} &= \texttt{111000} =a_1+a_2 +a_3.\label{C6 codewords}
\end{align}
Applying the definition \eqref{def:qj} along with the form for the codewords in \eqref{C6 codewords}, 
the corresponding binary observables are
\begin{align*}
    Q_1(C_6,P) &= P_2+P_5+P_6+P_8\\
    Q_2(C_6,P) &= P_3+P_5+P_7+P_8\\
    Q_3(C_6,P) &= P_4+P_6+P_7+P_8\\
    Q_4(C_6,P) &= P_3+P_4+P_5+P_6\\
    Q_5(C_6,P) &= P_2+P_4+P_5+P_7\\
    Q_6(C_6,P) &= P_2+P_3+P_6+P_7.
\end{align*}
We illustrate these binary observables in Figure \ref{fig:scheme}.
Now consider no errors on classical outcomes.
When we measure $Q_1(C_6,P)$ and obtain the classical outcome 0,
the state must be on the support of $I-Q_1(C_6,P)=P_1+P_3+P_4+P_7$, where $I$ denotes the identity operator.
If we measure $Q_2(C_6,P)$ and obtain the classical outcome 1, then the state is on the support of $P_1+P_3+P_4+P_7$ and $P_3 + P_5 +P_7+P_8$. Hence the state is on the support of $P_3+P_7$.
If we measure $Q_3(C_6,P)$ and obtain the classical outcome 1, then the state is on the support of $P_3+P_7$ and $P_4+P_6+P_7+P_8$.
Hence the state is on the support of $P_7$. Further measurements of the observables $Q_4(C_6,P), Q_5(C_6,P), Q_6(C_6,P)$ give redundant information about where the state is projected on, and 
we obtain the codeword ${\bf x}_{7}$ as the classical outcome. 

In Figure \ref{fig:scheme} we illustrate the measurement of the binary observables $Q_1(C_6,P),\dots, Q_n(C_6,P)$ when an error afflicts the classical outcome of $Q_5(C_6,P)$.

\section{Implications}

\noindent
{\bf Combinatorics:-}
 {\em What is the minimum number of $q$-observables required to correct $t$ errors on the classical outcome of a projective POVM with $M$ projectors}? We answer this question in the following.
\begin{corollary}
\label{coro}
Let $P$ be a projective POVM with $M$ projectors.
Let $n_q(M,d)$ to be the shortest $n$ such that there exists a code of length $n$ and with at least $M$ codewords and distance at least $d$.
Let $n_{q,t,M}$ be the smallest integer
such that there exist observables 
$Q_1,\dots, Q_n$ consistent with $P$, even after any $t$ errors occur on the classical outcomes of $Q_1,\dots, Q_n$.
Then $n_{q,t,M}=n_q(M,2t+1)$.    
\end{corollary}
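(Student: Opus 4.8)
The plan is to prove the equality by establishing the matching bounds $n_{q,t,M}\le n_q(M,2t+1)$ and $n_{q,t,M}\ge n_q(M,2t+1)$. The forward bound is the achievability direction and follows quickly from Theorem~\ref{thm:main}, whereas the reverse bound is a converse that extracts a classical code from an arbitrary admissible family of observables; I expect the converse to be the substantive part.

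For $n_{q,t,M}\le n_q(M,2t+1)$, I would invoke the definition of $n_q(M,2t+1)$ to obtain a $q$-ary code $C$ of length $n=n_q(M,2t+1)$ with at least $M$ codewords and $d(C)\ge 2t+1$. Deleting surplus codewords cannot decrease the minimum distance, so I may assume $C$ has exactly $M$ codewords while retaining $d(C)\ge 2t+1$, whence $t(C)\ge t$. Labelling the $M$ projectors of $P$ by the codewords of $C$ and forming $Q_1(C,P),\dots,Q_n(C,P)$ as in \eqref{def:qj}, Theorem~\ref{thm:main} shows these $n$ observables are consistent with $P$ after up to $t(C)\ge t$ outcome errors. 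This exhibits an admissible family of size $n$, giving $n_{q,t,M}\le n$.

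For the converse I would start from any family $Q_1,\dots,Q_n$ consistent with $P$ and robust to $t$ outcome errors, and read off its classical code. Since the $Q_j$ commute, they admit joint spectral projectors $\Pi_{\bf y}$ indexed by tuples of joint eigenvalues; as each $Q_j$ is a $q$-observable, every coordinate of ${\bf y}$ takes at most $q$ values, so after relabelling ${\bf y}\in\{0,\dots,q-1\}^n$. Let $f$ be the consistency function of Definition~1. The two facts I would establish are: (i) \emph{reachability}, that for each $k$ there is an achievable outcome ${\bf y}^{(k)}$ (that is, $\Pi_{{\bf y}^{(k)}}\neq 0$) with $f({\bf y}^{(k)})=k$ --- otherwise measuring an input supported on the range of $P_k$ would force a post-measurement state $\rho_{k'}$ with $k'\neq k$, which is null since $\tr[\rho P_{k'}]=0$, a contradiction; and (ii) \emph{separation}, that any two achievable tuples with distinct $f$-values lie at Hamming distance at least $2t+1$ --- otherwise a corrupted tuple within distance $t$ of both would have to be decoded to two distinct labels $k\neq k'$, and hence to two distinct post-measurement states, contradicting robustness to $t$ errors.

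Granting (i) and (ii), I would select one representative ${\bf y}^{(k)}$ per label to form $\tilde C=\{{\bf y}^{(k)}:k=1,\dots,M\}$. These are $M$ distinct tuples, because $f$ is single-valued, and they are pairwise at distance at least $2t+1$ by separation; hence $\tilde C$ is a length-$n$ $q$-ary code with $M$ codewords and distance at least $2t+1$, so $n\ge n_q(M,2t+1)$ by definition. Minimizing over admissible families then yields $n_{q,t,M}\ge n_q(M,2t+1)$, and combining the two bounds gives the claimed equality. The main obstacle is the converse, and within it the precise justification of reachability and separation from the measurement-theoretic definition of consistency; once these are in place the coding-theoretic bookkeeping is routine.
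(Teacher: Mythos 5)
Your proposal is correct, and it is substantially more complete than the paper's own proof. The paper disposes of Corollary~\ref{coro} in a single sentence, asserting that by Theorem~\ref{thm:main} the consistency-after-$t$-errors condition is \emph{equivalent} to the existence of a $q$-ary code of length $n$, distance at least $2t+1$, and $M$ codewords. But Theorem~\ref{thm:main} as proved only delivers one direction of that equivalence: given such a code, the observables $Q_j(C,P)$ are consistent with $P$ after $t$ errors, which is exactly your achievability bound $n_{q,t,M}\le n_q(M,2t+1)$. The converse --- that \emph{any} admissible family of commuting $q$-observables must hide a length-$n$ code with $M$ codewords and distance at least $2t+1$, so that $n\ge n_q(M,2t+1)$ --- is left implicit in the paper. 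You correctly identify this as the substantive half and supply a genuine argument: extracting joint-eigenvalue tuples from the commuting family, your reachability step (a label $k$ must be attained, since $f({\bf y})=k'\neq k$ on an input supported on the range of $P_k$ would equate $\tau$ with the null object $P_{k'}\rho P_{k'}/\tr[\rho P_{k'}]$) and your separation step (two achievable tuples with distinct $f$-values within Hamming distance $2t$ admit a common corruption ${\bf w}$ that the fixed decoding function would have to map to two distinct labels). This is the standard packing argument for why $t$-error correction forces distance $2t+1$, transplanted to the measurement setting, and it is exactly what is needed to upgrade the paper's one-line assertion to a proof of equality rather than merely an inequality.

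One point to make explicit: the paper's Definition~1 only formalizes noiseless consistency, so ``consistent with $P$ even after any $t$ errors'' needs a precise reading before your separation step is airtight. Your argument implicitly adopts the natural one --- a single function $f$, fixed in advance, satisfying $\tau=\rho_{f(\tilde{\bf y})}$ for every $\tilde{\bf y}$ within Hamming distance $t$ of the true outcome, with corrupted components ranging over the $q$-letter alphabet --- and under that reading both reachability and separation go through. You should state this reading explicitly, since the whole converse hinges on the decoder being a function of the corrupted tuple alone; with that spelled out, your proof is complete and strictly stronger than what the paper writes down.
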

\begin{proof}
From Theorem \ref{thm:main}, we know that 
the condition for $Q_1,\dots, Q_n$ to be consistent with $P$ after $t$ errors occur on the classical outcomes
is equivalent to the condition that a $q$-ary classical code $C$ has length $n$, distance at least $2t+1$, and has $M$ codewords.
\end{proof}
The combinatorics of $n_q(M,d)$ directly relates to the combinatorics of $A_q(n,d)$, where $A_q(n,d)$ is the maximum number of codewords in a $q$-ary code with Hamming distance $d$ and with codewords having $n$ components. 
Note that $n_{q,t,q}=2t+1$ through the use of a $q$-ary repetition code.
Using results on the combinatorics of $A_q(n,d)$ and $n_q(M,d)$ \cite{bounds25}, we illustrate the values of $n_{q,t,M}$,
in Table \ref{table} for $q=2$, $t=1,2,3$ and $2\le M \le 40$.

\begin{table}[ht]
  \centering
  \setlength{\tabcolsep}{7.5pt} 
  \begin{tabular}{c|*{8}{c}}
    \toprule
    \textbf{$t \backslash M$} & \textbf{2} & \textbf{4} & \textbf{6} & \textbf{8} & \textbf{12} & \textbf{16} & \textbf{20}
    & \textbf{38-40}\\
    \midrule
    \textbf{1} & 3 & $  6$ & $  7$ & $ 7$  & $  8$  & $ 8$  & $9$ & 10  \\
    \textbf{2} & 5& $9$  & $ 10$  & $11$  & $11$  & $12$  & $12$ & 14  \\
    \textbf{3} & 7 & $12$   & $14$   & $14$  & $15$  & $15$  & $16$ & 18\\
    \bottomrule
  \end{tabular}
    \caption{Some values of $n_{2,t,M}$.}
  \label{table}
\end{table}

When the number $M$ of projectors in $P$ is very large, 
we can bound $M$ in terms of the volume of a $q$-ary Hamming ball of radius $t$, which we denote as $V_{q,n}(t) \coloneqq \sum_{j=0}^t \binom n j (q-1)^j$.
Namely,
\begin{align}
q^n / V_{q,n}(2t) \le M \le q^n / V_{q,n}(t),
\label{classical-bounds}
\end{align}
where the upper and lower bounds are the
Hamming bound and Gilbert-Varshamov bound respectively 
\cite{sloane}.
bounds such as Johnson's bound \cite{DBLP:journals/tit/Johnson62} or linear programming bounds for classical codes \cite{DBLP:conf/focs/NavonS05,DBLP:journals/amco/MounitsEL07} can tighten the upper bound in \eqref{classical-bounds}.
For large $n$ and $t < n(q-1)/q$, we have $\frac{1}{n}\log_q V_{q,n}(t) = H_q(t/n) + o(1)$,
where $H_q(x) \coloneqq -x \log_q x - (1-x) \log_q (1-x) + x \log_q(q-1)$ denotes the $q$-ary entropy function.
Given $\tau$ as the fraction of errors on the classical outcomes of the measurement of our $q$-observables, for large $M$, we have 
\begin{align}
   \frac{\log_q M}{1- H_q(\tau) + o(1)} \le n_{q,\tau n ,M} \le 
   \frac{\log_q M }{  1-H_q(2\tau) + o(1)}. \label{n-asy-bounds}
\end{align}

\noindent
{\bf Implementation:-}
Similarly to Refs.~\cite{johnsson2020geometric,ouyang2022quantum}, we can couple our quantum state to $n$ bosonic modes initialized as coherent states $|\alpha_1\>,\dots, |\alpha_n\>$
and measure the modes to implement our scheme. 
Let $\hat n_j$ be the number operator on the $j$th mode,
and suppose that $2\pi |\alpha_j|^2\gg q$.
The interaction Hamiltonians
\begin{align}
    W_j =\gamma Q_j  \otimes \hat n_j 
\end{align}
model a dispersive coupling between the quantum system and the ancillary bosonic modes.

Now let $|\phi\>$ be a state for which $Q_j |\phi\> = z_j |\phi\>$ for all $j=1,\dots,n$.
Then $W_j|\phi\>|\alpha_j\> = |\phi\> (\gamma z_j \hat n_j |\alpha_j\>)$.
Hence $e^{-i W \theta}|\phi\>|\alpha_j\> 
= |\phi\>e^{-i \theta \gamma z_j \hat n_j}|\alpha_j\>
=|\phi\> | e^{-i \theta \gamma z_j} \alpha_j\>$.
With $\theta = 2\pi/(q \gamma)$,
 the initial phase space distribution of the $j$th mode with radius $|\alpha_j|^2$ and standard deviation $1/\sqrt{2}$ maps to up to $q$ different equiangular rotations in the complex plane.
 Using balanced homodyne detection \cite{scully_zubairy_1997} we can measure the quadratures of the output bosonic fields.
 Because we chose $2\pi |\alpha_j|^2\gg q$, the distributions for different $z_j$ will be distinguishable.
 Hence we project onto the eigenspaces of $Q_j$ in a non-destructive way. 
 Repeating the procedure for $j=1,\dots,n$ allows us to obtain the classical outcome $(z_1,\dots, z_n)$ in the noiseless setting. 
 From Theorem \ref{thm:main}, we can correct up to $t$ errors on $(z_1,\dots, z_n)$ using a classical decoder.
 \newline

\noindent
{\bf Application (Quantum error correction):-}
We can describe the recovery channel of any QEC code as a two-stage process \cite{KnL97}.
In the first stage, we measure a carefully chosen projective measurement with POVM $\Pi'$.
Upon measuring $\Pi'$, we get a classical outcome and a quantum output.
The classical output labels the subspace that the quantum output resides in.
In the second stage, a unitary operation dependent on the classical outcome brings the quantum output back to the codespace.

The projectors in $\Pi'$ depend on the QEC code and the set of operators $\mathfrak K$ to be corrected.
Since the number of correctible spaces of the code is at most $|\mathfrak K|$,
and at most one projector corresponds to an uncorrectible space,
we have $|\Pi'|\le |\mathfrak K|+1$.
For a distance $p$-ary QEC code on $m$ qudits that corrects $k$ errors, we can choose $\mathfrak K$ so that $|\mathfrak K|=V_{p^2,m}(k)$.
From \cite{KnL97}, $|\Pi'|\le |\mathfrak K|$.
Hence,
for an $m$ qubit QEC code that corrects a single error (has distance 3), we have $|\Pi'|\le 2+3m$.

As an example, consider the optimal non-additive nine-qubit binary QEC code that has codespace of dimension 12, and with distance 3 \cite{nonadditive-PhysRevLett.101.090501}. 
In this case $|\Pi'|\le 29$
From Table \ref{table}, 
deploying our scheme with 10 binary observables allows the correction of up to one error on the classical outcome of $\Pi'$. In contrast, the noiseless decoding of this non-additive nine-qubit code in Ref.~\cite{nonadditive-PhysRevLett.101.090501} requires five binary observables, and repeating these measurements thrice to allow the correction of one error necessitates the use of 15 binary observables, which is greater than the 10 binary observables our scheme requires.

Now consider $p$-ary QEC codes that correct $k$ errors using $m$ qudits.
Setting $\tau$ as the maximum fraction of errors on the classical outcome of $\Pi'$, 
from \eqref{n-asy-bounds},
the minimum number $n$ of binary observables required to allow robust syndrome extraction according to $\Pi'$ satisfies the bounds
\begin{align}
    n &\ge \frac{m ( H_{p^2}(k/m) \log_2 p ) + o(1) )}
    {\log_2 H_q(\tau) + o(1)} 
    \\
     n &\le 
    \frac{m ( H_{p^2}(2k/m) \log_2 p ) + o(1) )}
    {\log_2 H_q(2\tau) + o(1)} .
\end{align}

As another example, we consider the binomial code \cite{BinomialCodes2016}, which is a bosonic code on a single mode that corrects gain errors, loss errors and phase errors. 
Here, 
loss errors, gain errors and phase errors are monomials of $a$, $a^\dagger$ and $a^\dagger a$ respectively where $a$ denotes the mode's lowering operator.
Namely, a binomial code that corrects 
$g_1$ gain errors, $g_0$ loss errors, and $k$ phase errors has as its set of correctible errors $\mathfrak K = 
\{a^j : j=0,\dots, g_0\} \cup 
\{(a^\dagger)^j : j=0,\dots, g_1\} 
\cup
\{(a^\dagger a)^j : j=0,\dots, k\}.$
Clearly, $|\mathfrak K|=g_0+g_1+k+1$.
Such a binomial code has two parameters, 
the gap $g = g_0+g_1+1$, and $N = \max\{g0,g1,2k\}$ and encodes one logical qubit,
and is defined by the logical codewords in \cite[Eq.~(7)]{BinomialCodes2016}. 
For such a binomial code where $g_0=g_1=k$, 
we have $|\Pi'| \le 3k+2$.
In Table \ref{table-binom},
we present the minimum number of binary observables that are consistent with $\Pi'$ after the occurrence of up to a single error on the classical outcome of their measurement.

\begin{table}[ht]
  \centering
  \setlength{\tabcolsep}{7.5pt} 
  \begin{tabular}{c|*{8}{c}}
    \toprule
    \textbf{$k$} & \textbf{1} & \textbf{2} & \textbf{3} & \textbf{4} & \textbf{5} & \textbf{6} & \textbf{7}
    & \textbf{8}\\
    \midrule
    \textbf{$|\Pi'|$} & 5 & $  8$ & $  11$ & $ 14$  & $  17$  & $ 20$  & $23$ & 26  \\
    \textbf{$n$} & 7& $7$  & $8$  & $8$  & $9$  & $9$  & $10$ & 10  \\
    \bottomrule
  \end{tabular}
    \caption{Some values of the minimum number of binary observables consistent with $\Pi'$ needed to correct up to one error on the classical outcomes of their measurement $n$, and furthermore correct $k$ gain, loss and phase errors.}
  \label{table-binom}
\end{table}

\section{Discussions}
We proposed a set of commuting $q$-observables whose measurement is consistent with a given projective measurement, even after some errors corrupt the classical outcomes of the measurement of the observables. 
Hence, measuring these commuting observables effectively implements a robust projective measurement.

There is potential to study how near-term quantum algorithms that do not rely on QEC can be improved using our scheme in realistic settings. Moreover, it would be interesting to explore the implementation of our scheme with other non-stabilizer codes, such as concatenated cat codes \cite{CLY97,concat-cats-PRXQuantum.3.010329}, rotation-invariant codes \cite{rotation-invariant-PhysRevX.10.011058}, permutation-invariant codes \cite{Rus00,PoR04,ouyang2014permutation,OUYANG201743,ouyang2019permutation,ouyang2021permutation,aydin2023family}, 
codeword-stabilized codes \cite{CSSZ08}, 
 error-avoiding codes \cite{ZaR97,ouyang2021avoiding,mitigate-coherent-errors},
and certain codes that lie within the ground space of local Hamiltonians \cite{movassagh2020constructing}. 

\section{Acknowledgements}
YO acknowledges support from EPSRC (Grant No. EP/W028115/1).

\appendix
\bibliography{gds}{}
\end{document}